\title{FA*IR: A Fair Top-k Ranking Algorithm}
\keywords{Algorithmic fairness, Bias in Computer Systems, Ranking, Top-k selection.}
\newcommand{\monthName}{\ifcase \month \or January\or February\or March\or %
	April\or May \or June\or July\or August\or September\or October\or November\or %
	December\fi}
\newcommand{\spara}[1]{\smallskip\noindent{\bf #1}}
\newcommand{\algoFAIR}[0]{{\textsc{FA*IR}}\xspace}
\newcommand{\algoFAIRBF}[0]{{\textsc{\textbf{FA*IR}}}\xspace}
\newcommand{\adj}[0]{\ensuremath{\operatorname{c}}}
\newcommand{\alphaadj}[0]{\ensuremath{\alpha_c}}
\newcommand{\algoCorrect}[0]{{\sc AdjustSignificance}\xspace}
\begin{document}


	\author{Meike Zehlike}
	\affiliation{%
		\institution{TU Berlin}
		\city{Berlin}
		\country{Germany}
	}
	\email{meike.zehlike@tu-berlin.de}

	\author{Francesco Bonchi}
	\affiliation{%
		\institution{ISI Foundation}
		\city{Turin}
		\country{Italy}
	}
	\email{francesco.bonchi@isi.it}

	\author{Carlos Castillo}
	\affiliation{%
		\institution{Universitat Pompeu Fabra}
		\city{Barcelona}
		\state{Catalunya}
		\country{Spain}
	}
	\email{chato@acm.org}

	\author{Sara Hajian}
	\affiliation{%
		\institution{NTENT}
		\city{Barcelona}
		\state{Catalunya}
		\country{Spain}
	}
	\email{shajian@ntent.com}

	\author{Mohamed Megahed}
	\affiliation{%
		\institution{TU Berlin}
		\city{Berlin}
		\country{Germany}
	}
	\email{mohamed.megahed@campus.tu-berlin.de}

	\author{Ricardo Baeza-Yates}
	\affiliation{%
		\institution{Universitat Pompeu Fabra}
		\city{Barcelona}
		\state{Catalunya}
		\country{Spain}
	}
	\email{rbaeza@acm.org}

	\renewcommand{\shortauthors}{M. Zehlike et al.}


\begin{abstract}
	In this work, we define and solve the Fair Top-$k$ Ranking problem, in which we want to determine a subset of $k$ candidates from a large pool of $n \gg k$ candidates, maximizing utility (i.e., select the ``best'' candidates) subject to group fairness criteria.

	Our ranked group fairness definition extends group fairness using the standard notion of protected groups and is based on ensuring that the proportion of protected candidates in every prefix of the top-$k$ ranking remains statistically above or indistinguishable from a given minimum.
	Utility is operationalized in two ways:
	\begin{inparaenum}[(i)]
		\item every candidate included in the top-$k$ should be more qualified than every candidate not included; and
		\item for every pair of candidates in the top-$k$, the more qualified candidate should be ranked above.
	\end{inparaenum}

	An efficient algorithm is presented for producing the Fair Top-$k$ Ranking, and tested experimentally on existing datasets as well as new datasets released with this paper, showing that our approach yields small distortions with respect to rankings that maximize utility without considering fairness criteria.
	To the best of our knowledge, this is the first algorithm grounded in statistical tests that can mitigate biases in the representation of an under-represented group along a ranked list. 
\end{abstract}
\maketitle

\vspace*{-\baselineskip}
\section{Introduction}\label{sec:introduction}
People search engines are increasingly common for job recruiting and even for finding companionship or friendship.
A top-$k$ ranking algorithm is typically used to find the most suitable way of ordering items (persons, in this case), considering that if the number of people matching a query is large, most users will not scan the entire list.
Conventionally, these lists are ranked in descending order of some measure of the relative quality of items.

The main concern motivating this paper is that a biased machine learning model that produces ranked lists can further systematically reduce the visibility of an already disadvantaged group~\cite{peder2008,Dwork2012} (corresponding to a legally protected category such as people with disabilities, racial or ethnic minorities, or an under-represented gender in a specific industry).

According to \cite{friedman1996bias} a computer system is \emph{biased} ``if it systematically and unfairly discriminate[s] against certain individuals or groups of individuals in favor of others. A system discriminates unfairly if it denies an opportunity or a good or if it assigns an undesirable outcome to an individual or a group of individuals on grounds that are unreasonable or inappropriate.''
Yet ``unfair discrimination alone does not give rise to bias unless it occurs systematically'' and ``systematic discrimination does not establish bias unless it is joined with an unfair outcome.''
On a ranking, the desired good for an individual is to appear in the result and to be ranked amongst the top-$k$ positions. The outcome is unfair if members of a protected group are systematically ranked lower than those of a privileged group.
The ranking algorithm discriminates unfairly if this ranking decision is based fully or partially on the protected feature. This discrimination is systematic when it is embodied in the algorithm's ranking model. As shown in earlier research, a machine learning model trained on datasets incorporating \textit{preexisting bias} will embody this bias and therefore produce biased results, potentially increasing any disadvantage further, reinforcing existing bias~\cite{oneil2016weapons}.

Based on this observation, in this paper we study the problem of producing a fair ranking given one legally-protected attribute,\footnote{We make the simplifying assumption that there is a dominant legally-protected attribute of interest in each case. The extension to deal with multiple protected attributes is left for future work.} i.e., a ranking in which the representation of the minority group does not fall below a minimum proportion $p$ \emph{at any point in the ranking}, while the utility of the ranking is maintained as high as possible.

We propose a post-processing method to remove the systematic bias by means of a \emph{ranked group fairness criterion}, that we introduce in this paper. We assume a ranking algorithm has given an undesirable outcome to a group of individuals, but the algorithm itself cannot determine if the grounds were appropriate or not. Hence we expect the user of our method to know that the outcome is based on unreasonable or inappropriate grounds and provide $p$ as input which can originate in a legal mandate or in voluntary commitments.
For instance, the US Equal Employment Opportunity Commission sets a goal of 12\% of workers with disabilities in federal agencies in the US,\footnote{US EEOC: \url{https://www1.eeoc.gov/eeoc/newsroom/release/1-3-17.cfm}, Jan 2017.}
while in Spain, a minimum of 40\% of political candidates in voting districts exceeding a certain size must be women~\cite{verge2010gendering}.
In other cases, such quotas might be adopted voluntarily, for instance through a diversity charter.\footnote{European Commission: \url{http://ec.europa.eu/justice/discrimination/diversity/charters/}}
In general these measures do not mandate perfect parity, as distributions of qualifications across groups can be unbalanced for legitimate, explainable reasons~\cite{vzliobaite2011handling,pedreschi2009integrating}. 

The ranked group fairness criterion compares the number of protected elements in every prefix of the ranking with the expected number of protected elements if they were picked at random using Bernoulli trials (independent ``coin tosses'') with success probability $p$.
Given that we use a statistical test for this comparison, we include a significance parameter $\alpha$ corresponding to the probability of a Type I error, which means rejecting a fair ranking in this test.

\medskip
\textit{Example.} Consider the three rankings in Table \ref{tbl:xing_intro_example} corresponding to searches for an ``economist,'' ``market research analyst,'' and ``copywriter'' in XING\footnote{\url{https://www.xing.com/}}, an online platform for jobs that is used by recruiters and headhunters, mostly in German-speaking countries, to find suitable candidates in diverse fields (this data collection is reported in detail on \S\ref{concept:XING}). While analyzing  the extent to which candidates of both genders are represented as we go down these lists,  we can observe that such proportion keep changing and is not uniform (see, for instance, the top-10 vs. the top-40). As a consequence, recruiters examining these lists will see different proportions depending on the point at which they decide to stop.
Corresponding with \cite{friedman1996bias} this outcome systematically disadvantages individuals of one gender by preferring the other at the top-$k$ positions. As we do not know the learning model behind the ranking, we assume that the result is at least partly based on the protected attribute \emph{gender}.

Let $k = 10$. Our notion of \textit{ranked group fairness} imposes a fair representation with proportion $p$ and significance $\alpha$ at each top-$i$ position with $i \in [1,10]$ (formal definitions are given in \S\ref{sec:problem}).
Consider for instance $\alpha = 0.1$ and suppose that the required proportion is $p = 0.4$.  This translates (see Table \ref{tbl:ranked_group_fairness_table}) to having at least one individual from the protected minority class in the first 5 positions: therefore the ranking for  ``copywriter'' would be rejected as unfair. However, it also requires to have at least 2 individuals from the protected group in the first 9 positions: therefore also the ranking for ``economist'' is rejected as unfair, while the ranking for ``market research analyst'' is fair for  $p = 0.4$. However, if we would require $p = 0.5$ then this translates in having at least 3 individuals from the minority group in the top-10, and thus even the ranking for ``market research analyst'' would be considered unfair.
We note that for simplicity, in this example we have not adjusted the significance $\alpha$ to account for multiple statistical tests; this is not trivial, and is one of the key contributions of this paper.
\medskip

\begin{table}[t]
	\caption{Example of non-uniformity of the top-10 vs. the top-40 results for different queries in XING (Jan 2017).
		\label{tbl:xing_intro_example}}

	\vspace{-3mm}

	\centering\small\begin{tabular}{lccccc}\toprule
		& Position					  & top 10 & top 10  & top 40 & top 40 \\
		& \texttt{1 2 3 4 5 6 7 8 9 10} & male & female & male & female \\
		\midrule
		Econ.  & \texttt{f m m m m m m m m m} & 90\% & 10\% & 73\% & 27\% \\
		Analyst& \texttt{f m f f f f f m f f} & 20\% & 80\% & 43\% & 57\% \\
		Copywr.& \texttt{m m m m m m f m m m} & 90\% & 10\% & 73\% & 27\% \\
		\bottomrule
	\end{tabular}

	\vspace{-2mm}
\end{table}

\spara{Our contributions.}
In this paper, we define and analyze the {\sc Fair Top-$k$ Ranking problem}, in which we want to determine a subset of $k$ candidates from a large pool of $n \gg k$ candidates, in a way that maximizes utility (selects the ``best'' candidates), subject to group fairness criteria. The running example we use in this paper is that of selecting automatically, from a large pool of potential candidates, a smaller group that will be interviewed for a position.

Our notion of utility assumes that we want the interview the most qualified candidates, while their qualification is equal to a relevance score calculated by a ranking algorithm.
This score is assumed to be based on relevant metrics for evaluating candidates for a position, which depending on the specific skills required for the job could be their grades (e.g., Grade Point Average), their results in a standardized knowledge/skills test specific for a job, their typing speed in words per minute for typists, or their number of hours of flight in the case of pilots.
We note that this measurement will embody \emph{preexisting bias} (e.g. if black pilots are given less opportunities to flight they accumulate less flight hours), as well as \emph{technical bias}, as learning algorithms are known to be susceptible to direct and indirect discrimination~\cite{tuto2016,HajianFerrer12}.

The utility objective is operationalized in two ways.
First, by a criterion we call \emph{selection utility}, which prefers rankings in which every candidate included in the top-$k$ is more qualified than every candidate not included, or in which the difference in their qualifications is small.
Second, by a criterion we call \emph{ordering utility}, which prefers rankings in which for every pair of candidates included in the top-$k$, either the more qualified candidate is ranked above, or the difference in their qualifications is small.

Our definition of \emph{ranked group fairness} reflects the legal principle of group under-representation in obtaining a benefit \cite{ellis2012eu,lerner2003group}. We use the standard notion of a protected group (e.g., ``people with disabilities''); where protection emanates from a legal mandate or a voluntary commitment.
%
%
We formulate a criterion applying a statistical test on the proportion of protected candidates on every prefix of the ranking, which should be indistinguishable or above a given minimum.
%
%
We also show that the verification of the ranked group fairness criterion can be implemented efficiently.

Finally, we propose an efficient algorithm, named \algoFAIR, for producing a top-$k$ ranking that maximizes utility while satisfying ranked group fairness, as long as there are ``enough'' protected candidates to achieve the desired minimum proportion.
We also present extensive experiments using both existing and new datasets to evaluate the performance of our approach compared to the so-called ``color-blind'' ranking with respect to both the utility of ranking and the fairness degree.

Summarizing, the main contributions of this paper are:
\begin{compactenum}
	\item the principled definition of \emph{ranked group fairness}, and the associated  {\sc Fair Top-$k$ Ranking problem};
	\item the \algoFAIR\ algorithm for producing a top-$k$ ranking that maximizes utility while satisfying ranked group fairness.
\end{compactenum}

Our method can be used within an anti-discrimination framework such as \emph{positive actions}~\cite{sowell2005affirmative}. We do not claim these are the only way of achieving fairness, but we provide \emph{an algorithm grounded in statistical tests that enables the implementation of a positive action policy in the context of ranking}.

The rest of this paper is organized as follows.
The next section presents a brief survey of related literature, while Section~\ref{sec:problem} introduces our ranked group fairness and utility criteria, our model adjustment approach, and a formal problem statement.
Section~\ref{sec:algorithms} describes the \algoFAIR\ algorithm.
Section~\ref{sec:experiments} presents experimental results.
Section~\ref{sec:conclusions} presents our conclusions and future work.

\section{Related Work}\label{sec:related-work}

Anti-discrimination has only recently been considered from an algorithmic perspective~\cite{tuto2016}. Some proposals are oriented to discovering and measuring discrimination (e.g., \cite{peder2008,Bonchi2015,angwin_2016_machine}); while others deal with mitigating or removing discrimination 
(e.g., \cite{CaldersICDM,HajianFerrer12,hajian2014,Dwork2012,Zemel2013}).
All these methods are known as \emph{fairness-aware algorithms}.

\subsection{Group fairness and individual fairness}
Two basic frameworks have been adopted in recent studies on algorithmic discrimination: \begin{inparaenum}[(i)]
	\item \emph{individual fairness}, a requirement that individuals should be treated consistently~\cite{Dwork2012}; and
	\item \emph{group fairness}, also known as statistical parity, a requirement that the protected groups should be treated similarly to the advantaged group or the populations as a whole \cite{peder2008,pederruggi2009}.
\end{inparaenum}

Different fairness-aware algorithms have been proposed to achieve group and/or individual fairness, mostly for predictive tasks. \citet{Calders2010} consider three approaches to deal with naive Bayes models by modifying the learning algorithm.
\citet{CaldersICDM} modify the entropy-based splitting criterion in decision tree induction to account for attributes denoting protected groups.
\citet{Kamishima2012}  apply a regularization (i.e., a
change in the objective minimization function) to probabilistic discriminative models, such
as logistic regression. \citet{zafar2015} describe fairness constraints for several classification methods.

\citet{Feldman2015} study \emph{disparate impact} in data, which corresponds to an unintended form of group discrimination, in which a protected group is less likely to receive a benefit than a non-protected group~\cite{Barocas2014}.
Besides measuring disparate impact, the authors also propose a method for removing it: we use this method as one of our experimental baselines in \S\ref{sec:experiments-baselines}. Their method ``repairs'' the scores of the protected group to make them have the same or similar distribution as the scores of the non-protected group, which is one particular form of positive action.
%
%
Recently, other fairness-aware algorithms have been proposed for mostly supervised learning algorithms and different bias mitigation strategies \cite{hardt2016equality, jabbari2016fair, friedler2016possibility, celis2016fair, corbett2017algorithmic}.


\subsection{Fair Ranking}
%
%
%
\citet{yang2016measuring} studied the problem of fairness in rankings.
They propose a statistical parity measure based on comparing the distributions of protected and non-protected candidates (for instance, using KL-divergence) on different prefixes of the list (e.g., top-10, top-20, top-30) and then averaging these differences in a discounted manner.
The discount used is logarithmic, similarly to Normalized Discounted Cumulative Gain (NDCG, a popular measure used in Information Retrieval~\cite{jarvelin2002cumulated}).
Finally, they show very preliminary results on incorporating their statistical parity measure into an optimization framework for improving fairness of ranked outputs while maintaining accuracy.
We use the synthetic ranking generation procedure of~\citet{yang2016measuring} to calibrate our method, and optimize directly the utility of a ranking that has statistical properties (ranked group fairness) resembling the ones of a ranking generated using that procedure; in other words, unlike~\cite{yang2016measuring}, we connect the creation of the ranking with the metric used for assessing fairness. 


\citet{kulshrestha_2017_quantifying}  determine search bias in rankings, proposing a quantification framework that measures the bias of the results of a search engine.
This framework discerns to what extent this output bias is due to the input dataset that feeds into the ranking system, and how much is due to the bias introduced by the system itself. 
%
In contrast to their work, which mostly focus on auditing ranking algorithms to identify the sources of bias in the data or algorithm, our paper focuses on generating fair ranked results.

A recent work~\cite{celis2017ranking} proposes algorithms for constrained ranking, in which the constraint is a $k \times \ell$ matrix with $k$ the length of the ranking and $\ell$ the number of classes, indicating the maximum number of elements of each class (protected or non-protected in the binary case) that can appear at any given position in the ranking. The objective is to maximize a general utility function that has a positional discount, i.e., gives more weight to placing a candidate with more qualifications in a higher position.
Differently from~\cite{celis2017ranking}, in our work we show how to construct the constraint matrix by means of a statistical test of ranked group fairness (a problem they left open), and our measure of utility is based on individuals, which allows to determine which individuals are more affected by the re-ranking with respect to the non-fairness-aware solution \S\ref{concept:our-utility-individual-fairness}.
\vspace*{-0.1cm}
\subsection{Diversity}

Additionally, the idea that we want to avoid showing only items of the same class has been studied in the Information Retrieval community for many years. The motivation there is that the user query may have different intents and we want to cover several of them with results. The most common approach, since \citet{carbonell1998use}, is to consider distances between elements, and maximize a combination of relevance (utility) with a penalty for adding to the ranking an element that is too similar to an element already appearing at a higher position.
A similar idea is used for diversification in recommender systems through various methods~\cite{kunaver2017diversity,channamsetty2017recommender}. They deal with different kinds of bias such as presentation bias, where, only a few items are shown and most of the items are not shown, and also popularity bias and a negative bias towards new items.
An exception is \citet{sakai2011evaluating}, that provides a framework for per-intent NDCG for evaluating diversity, in which an ``intent'' could be mapped to a protected/non-protected group in the fairness ranking setting. Their method, however, is concerned with evaluating a ranking, similar to the NDCG-based metrics described by~\citet{yang2016measuring} that we describe before, and not with a construction of such ranking, as we do in this paper. In contrast with most of the research on diversity of ranking results or recommender systems, our work operates on a discrete set of classes (not based on similarity to previous items).



\section{The Fair Top-k Ranking Problem}\label{sec:problem}

In this section, we first present the needed notation (\S\ref{subsec:preliminaries}), then the ranked group fairness criterion (\S\ref{subsec:group-fairness}-\S\ref{subsec:group-fairness-correction}) and criteria for utility (\S\ref{subsec:individual-fairness}). Finally we provide a formal problem statement (\S\ref{subsec:problem-statement}).


\subsection{Preliminaries and Notation}
\label{subsec:preliminaries}
\spara{Notation.}
Let $[n] = \{ 1, 2, \dots, n \}$ represent a set of candidates; let $q_i$ for $i \in [n]$ denote the ``quality'' of candidate $i$: this can be interpreted as an overall summary of the fitness of candidate $i$ for the specific job, task, or search query, and that could be obtained by the combination of several different attributes, possibly by means of a machine learning model, and potentially including preexisting and technical bias with respect to the protected group.
%
%
We will consider two kinds of candidates, protected and non-protected, and we will assume there are enough of them, i.e., at least $k$ of each kind. Let $g_i = 1$ if candidate $i$ is in the protected group, $g_i = 0$ otherwise.
Let ${\mathcal P}_{k,n}$ represent all the subsets of $[n]$ containing exactly $k$ elements.
Let ${\mathcal T}_{k,n}$ represent the union of all permutations of sets in ${\mathcal P}_{k,n}$.
For a permutation $\tau \in {\mathcal T}_{k,n}$ and an element $i \in [n]$, let
\[
r(i, \tau) = \begin{cases}
\mathrm{rank~of~} i \mathrm{~in~} \tau & \mathrm{if~} i \in \tau~, \\
|\tau| + 1 & \mathrm{otherwise}.
\end{cases}
\]

We further define $\tau_p$ to be the number of protected elements in $\tau$, i.e., $\tau_p = | \{ i \in \tau: g_i = 1 \} |$.
Let $c \in {\mathcal T}_{n,n}$ be a permutation such that $\forall i,j \in [n], r(i,c) < r(j,c) \Rightarrow q_i \ge q_j$. We call this the \emph{color-blind} ranking of $[n]$, because it ignores whether elements are protected or non-protected.
Let $c|_k = \langle c(1), c(2), \dots, c(k) \rangle$ be a prefix of size $k$ of this ranking. \label{concept:color-blind-ranking}
%


\spara{Fair top-$k$ ranking criteria.}\label{concept:criteria}
We would like to obtain $\tau \in {\mathcal T}_{k,n}$ with the following characteristics, which we describe formally next: 

\begin{enumerate}[{Criterion} 1.]
	\item Ranked group fairness: $\tau$ should fairly represent the protected group; \label{cond:ranking}

	\item Selection utility: $\tau$ should contain the most qualified candidates; and \label{cond:selection}

	\item Ordering utility: $\tau$ should be ordered by decreasing qualifications.\label{cond:ordering}
\end{enumerate}

We will provide a formal problem statement in \S\ref{subsec:problem-statement}, but first, we need to provide a formal definition of each of the criteria. 

\subsection{Group Fairness for Rankings}
\label{subsec:group-fairness}

We operationalize criterion~\ref{cond:ranking} of \S\ref{concept:criteria} by means of a \emph{ranked group fairness criterion}, which takes as input a protected group and a minimum target proportion of protected elements in the ranking, $p$. Intuitively, this criterion declares the ranking as unfair if the observed proportion is far below the target one.

Specifically, the ranked group fairness criterion compares the number of protected elements in every prefix of the ranking, with the expected number of protected elements if they were picked at random using Bernoulli trials.
The criterion is based on a statistical test, and we include a significance parameter ($\alpha$) corresponding to the probability of rejecting a fair ranking (i.e., a Type I error).
%

\begin{definition}[Fair representation condition]
	Let $F(x;n,p)$ be the cumulative distribution function for a binomial distribution of parameters $n$ and $p$. 
	A set $\tau \subseteq \mathcal{T}_{k,n}$, having $\tau_p$ protected candidates fairly represents the protected group with minimal proportion $p$ and significance $\alpha$,
	if $F(\tau_p;k,p) > \alpha$.
\end{definition}

This is equivalent to using a statistical test where the null hypothesis $H_0$ is that the protected elements are represented with a sufficient proportion $p_t$ ($p_t \ge p$), and the alternative hypothesis $H_a$

\noindent is that the proportion of protected elements is insufficient ($p_t < p$). In this test, the p-value is $F(\tau_p; k, p)$ and we reject the null hypothesis, and thus declare the ranking as unfair, if the p-value is less than or equal to the threshold $\alpha$.

The ranked group fairness criterion enforces the fair representation constraint over all prefixes of the ranking:

\begin{definition}[Ranked group fairness condition]
	A ranking $\tau \in {\mathcal T}_{k,n}$ satisfies the ranked group fairness condition with parameters $p$ and $\alpha$, if for every prefix $\tau|_i = \langle \tau(1), \tau(2), \dots, \tau(i) \rangle$ with $1 \le i \le k$, the set $\tau|_i$ satisfies the fair representation condition with proportion $p$ and significance $\alphaadj = \adj(\alpha, k, p)$.
	Function $\adj(\alpha, k, p)$ is a corrected significance to account for multiple testing (described in \S\ref{subsec:group-fairness-correction}).
\end{definition}

We remark that a larger $\alpha$ means a larger probability of declaring a fair ranking as unfair. In our experiments (\S\ref{sec:experiments}), we use a relatively conservative setting of $\alpha=0.1$.
The ranked group fairness condition can be used to create a \emph{ranked group fairness measure}. For a ranking $\tau$ and probability $p$, the ranked group fairness measure is the maximum $\alpha \in [0,1]$ for which $\tau$ satisfies the ranked group fairness condition.
Larger values indicate a stricter adherence to the required number of protected elements at each position.

\spara{Verifying ranked group fairness.}
Note that ranked group fairness can be verified efficiently in time $O(k)$, by having a pre-computed table of the percent point function with parameters $k$ and $p$, i.e, the inverse of $F(x;k,p)$.
Table~\ref{tbl:ranked_group_fairness_table} shows an example of such a table, computed for $\alpha=0.1$.
For instance, for $p=0.5$ we see that at least 1 candidate from the protected group is needed in the top 4 positions, and 2 protected candidates in the top 7 positions.

\begin{table}[t!]
	\caption{Example values of $m_{\alpha,p}(k)$, the minimum number of candidates in the protected group that must appear in the top $k$ positions to pass the ranked group fairness criteria with $\alpha=0.1$.}
	\vspace{-3mm}
	\label{tbl:ranked_group_fairness_table}
	\small\begin{tabular}{r|cccccccccccc}
		\diaghead{some text}%
		{p}{k}&
		1 & 2 & 3 & 4 & 5 & 6 & 7 & 8 & 9 & 10 & 11 & 12 \\ \midrule
		0.1      & 0 & 0 & 0 & 0 & 0 & 0 & 0 & 0 & 0 & 0  &  0 &  0 \\
		0.2      & 0 & 0 & 0 & 0 & 0 & 0 & 0 & 0 & 0 & 0  &  1 &  1 \\
		0.3      & 0 & 0 & 0 & 0 & 0 & 0 & 1 & 1 & 1 & 1  &  1 &  2 \\
		0.4      & 0 & 0 & 0 & 0 & 1 & 1 & 1 & 1 & 2 & 2  &  2 &  3 \\
		0.5      & 0 & 0 & 0 & 1 & 1 & 1 & 2 & 2 & 3 & 3  &  3 &  4 \\
		0.6      & 0 & 0 & 1 & 1 & 2 & 2 & 3 & 3 & 4 & 4  &  5 &  5 \\
		0.7      & 0 & 1 & 1 & 2 & 2 & 3 & 3 & 4 & 5 & 5  &  6 &  6 \\
		\bottomrule
	\end{tabular}
\end{table}

\subsection{Model Adjustment}
\label{subsec:group-fairness-correction}

Our ranked group fairness definition requires an adjusted significance $\alphaadj = \adj(\alpha, k, p)$.
This is required because it tests multiple hypotheses ($k$ of them). If we use $\alphaadj = \alpha$, we might produce false negatives, rejecting fair rankings, at a rate larger than $\alpha$.

The adjustment we propose is calibrated using the generative model of \citet{yang2016measuring}, which creates a ranking that we will consider fair by: \begin{inparaenum}[(i)]
	\item starting with an empty list, and
	\item incrementally adding the best available protected candidate with probability $p$, or the best available non-protected candidate with probability $1-p$.
\end{inparaenum}

\begin{figure}[t!]
	\centering\includegraphics[width=.55\columnwidth]{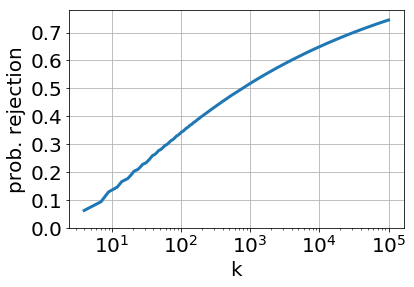}
	\vspace{-2mm}
	\caption{Example showing the need for multiple tests correction. The probability that a ranking generated by the method of \citet{yang2016measuring} with $p=0.5$ fails the ranked group fairness test with $p=0.5$ using $\alphaadj=0.1$, is in general larger than $\alpha=0.1$. Note the scale of $k$ is logarithmic.}
	\vspace{-3mm}
	\label{fig:example_plot_adjustment}
\end{figure}

Figure~\ref{fig:example_plot_adjustment} shows the probability that a fair ranking generated with $p=0.5$ is rejected by our ranked group fairness test with $p=0.5, ~\alphaadj=0.1$. The curve is computed analytically by the method we describe in the following paragraphs, and it experimentally matches the result of simulations we performed. 
We can see that the probability of a Type-I error (declaring this fair ranking as unfair) is in general higher than $\alpha = 0.1$.
If the $k$ tests were independent, we could use $\alphaadj = 1 - (1 - \alpha)^{1/k}$ (i.e., {\v S}id{\'a}k's correction), but given the positive dependence, the false negative rate is smaller than the bound given by {\v S}id{\'a}k's correction.
%

\newcommand{\mfail}{\ensuremath{\operatorname{fail}}}
\newcommand{\msucc}{\ensuremath{\operatorname{succ}}}
\newcommand{\minv}{m^{-1}}

The probability that a ranking generated using the process of \citet{yang2016measuring} with parameter $p$ passes the ranked group fairness criteria where each test is done with parameters $(p,\alphaadj)$ can be computed as follows:
Let $m(k) = m_{\alpha,p}(k)$ be as before the number of protected elements required up to position $k$.
Let $\minv(i) = k$ s.t. $m(k) = i$ be the position at which $i$ protected elements are required.
Let $b(i) = \minv(i) - \minv(i-1)$ (with $\minv(0) = 0$) be the size of a ``block,'' that is, the gap between one increase and the next in $m(\cdot)$.
An example is shown on Table~\ref{tbl:example_mtable}.
\begin{table}[t!]
	\caption{Example of $m(\cdot)$, $\minv(\cdot)$, and $b(\cdot)$ for $p=0.5, \alpha=0.1$.}
	\vspace{-3mm}
	\label{tbl:example_mtable}
	{\small
		\begin{tabular}{cccccccccccccc}\toprule
			$k$    & 1 & 2 & 3 & {\bf 4} & 5 & 6 & {\bf 7} & 8 & {\bf 9} & 10 & 11 & {\bf 12} \\
			\midrule
			$m(k)$ & 0 & 0 & 0 & \multicolumn{1}{c|}{1} & 1 & 1 & \multicolumn{1}{c|}{2} & 2 & \multicolumn{1}{c|}{3} & 3  & 3  & \multicolumn{1}{c}{4}\\
			Inverse   & \multicolumn{4}{c|}{$\minv(1)=4$}
			& \multicolumn{3}{c|}{$\minv(2)=7$}
			& \multicolumn{2}{c|}{$\minv(3)=9$}
			& \multicolumn{3}{c}{$\minv(4)=12$}\\
			Blocks       & \multicolumn{4}{c|}{$b(1)=4$}
			& \multicolumn{3}{c|}{$b(2)=3$}
			& \multicolumn{2}{c|}{$b(3)=2$}
			& \multicolumn{3}{c}{$b(4)=3$}\\\bottomrule
		\end{tabular}
	}
\end{table}

Let $I_\ell = \{ (i_1, i_2, ..., i_\ell): \forall \ell' \in [\ell], 0 \le i_{\ell'} \le b(\ell') \wedge \sum_{j=1}^{\ell'} i_j \ge \ell' \}$ represent all possible ways in which a fair ranking generated by the method of \citet{yang2016measuring} can pass the ranked group fairness test up to block $\ell$, with $i_j$ corresponding to the number of protected elements in block $1 \le j \le k$.
The probability of considering this ranking of $k$ elements ($\minv(k)$ blocks) unfair, is:
\begin{equation} \label{eq:failureProb}
1 - \sum_{v \in I_{\minv(k)}} \prod_{j=1}^{\minv(k)} f(v_j; b(j), p)
\end{equation}

\noindent where $f(x;b(j),p) = Pr(X = x)$ is the probability density function of a binomially-distributed variable $X \sim Bin(b(j), p)$.

\begin{algorithm}[t]
	\caption{Algorithm \algoCorrect used to compute model adjustment. Note that for notational convenience, vector indexes start at zero. Operator ``$>>$'' shifts vector components to the right, padding on the left with zeros.}
	\label{alg:correction} 
	\small
	\AlgInput{$k$, the size of the ranking to produce; $p$, the expected proportion of protected elements; $\alphaadj$, the significance for each individual test.}
	\AlgOutput{The probability of rejecting a fair ranking.}
	$(m_{\operatorname{old}},i_{\operatorname{old}}) \leftarrow (0, 0)$ \AlgComment{Auxiliary vectors}
	\For{$i \leftarrow 1$ \KwTo $k$}{
		$m[i] \leftarrow F^{-1}(\alphaadj; i, p)$ \\
		\If{$m[i] > m_{\operatorname{old}}$}{
			$\minv[m_{\operatorname{old}}] \leftarrow i$ \\
			$b[m_{\operatorname{old}}] \leftarrow i - i_{\operatorname{old}}$ \\
			$(m_{\operatorname{old}},i_{\operatorname{old}}) \leftarrow (m[i], i)$ \\
		}
	}
	$S[0] \leftarrow 1$ \AlgComment{Success probabilities}
	\For{$j \leftarrow 1$ \KwTo $m(k)$}{
		$S_{\operatorname{new}} \leftarrow$ zero vector of dimension $j$ \\
		\For{$i \leftarrow 0$ \KwTo $b(j)$}{
			\AlgComment{$f(i;b(j),p)$ is the prob. mass of $Bin(b(j),p)$}
			$S_{\operatorname{new}} \leftarrow S_{\operatorname{new}} + ( S >> i ) \cdot f(i; b(j), p)$ \\
		}
		$S_{\operatorname{new}}[j-1] \leftarrow 0$ \\
		$S \leftarrow S_{\operatorname{new}}$ \\
	}
	\Return{probability of rejecting a fair ranking: $1 - \sum S[i]$ }
\end{algorithm}

The above expression is intractable because of the large number of combinations in $I_{\minv(k)}$; however, there is an efficient iterative process to compute this quantity, shown in Algorithm~\ref{alg:correction}.
This algorithm maintains a vector $S$ that at iteration $\ell$ holds in position $S[i]$ the probability of having obtained $i$ protected elements in the first $\ell$ blocks, conditioned on obtaining at least $j$ protected elements up to each block $1 \le j \le \ell$. This has running time $O(k^2)$, but we note it has to be executed only once, as it does not depend on the dataset, only on $k$.
The summation of the probabilities in this vector $S$ is the probability that a fair ranking is accepted when using $\alphaadj$.
This algorithm can be used to determine the value of $\alphaadj$ at which the acceptance probability becomes $1-\alpha$, for instance, by performing binary search. This adds a logarithmic factor that depends on the desired precision. 
The values of $\alphaadj$ obtained using this procedure for selected $k, p$ and $\alpha=0.1$ appear on Table~\ref{tbl:alpha_corrected}.

\begin{table}[t!]
	\caption{Adjusted significance $\alphaadj$ obtained by using \algoCorrect with $\alpha=0.1$ for selected $k, p$. For small values of $k, p$ there is no $\alphaadj$ that yields the required significance.}
	\vspace{-3mm}
	\label{tbl:alpha_corrected}
	\small\begin{tabular}{r|cccc}
		\diaghead{soi text}%
		{$\;$\\p}{k}&
		$40$ & $100$ & $1,000$ & $1,500$ \\\midrule
		0.1 & -- & -- & 0.0140 & 0.0122 \\
		0.2 & -- & -- & 0.0115 & 0.0101 \\
		0.3 & -- & 0.0220 & 0.0103 & 0.0092 \\
		0.4 & -- & 0.0222 & 0.0099 & 0.0088 \\
		0.5 & 0.0313 & 0.0207 & 0.0096 & 0.0084 \\
		0.6 & 0.0321 & 0.0209 & 0.0093 & 0.0085 \\
		0.7 & 0.0293 & 0.0216 & 0.0094 & 0.0084 \\
		\bottomrule
	\end{tabular}
\end{table}

\subsection{Utility}
\label{subsec:individual-fairness}

Our notion of utility reflects the desire to select candidates that are potentially better qualified, and to rank them as high as possible.
In contrast with previous works~\cite{yang2016measuring,celis2017ranking}, we do not assume we know the contribution of having a given candidate at a particular position, but instead base our utility calculation on losses due to non-monotonicity.
The qualifications may have been even proven to be biased against a protected group, as is the case with the COMPAS scores~\cite{angwin_2016_machine} that we use in the experiments of \S\ref{sec:experiments}, but our approach can bound the effect of that bias, because the utility maximization is subject to the ranked group fairness constraint.

\spara{Ranked utility.}
The ranked individual utility associated to a candidate $i$ in a ranking $\tau$, compares it against the least qualified candidate ranked above it.

\begin{definition}[Ranked utility of an element]
	\label{def:rankedIndividualFairness}
	The ranked utility of an element $i \in [n]$ in ranking $\tau$, is:
	\[
	\operatorname{utility}(i,\tau) = \begin{cases}
	\overline{q} - q_i & \textrm{if~} \overline{q} \triangleq \min_{j: r(j,\tau) < r(i,\tau)} q_j < q_i \\
	0 & \textrm{otherwise}\\
	\end{cases}
	\]
\end{definition}
By this definition, the maximum ranked individual utility that can be attained by an element is zero.
%

\spara{Selection utility.}
We operationalize criterion~\ref{cond:selection} of \S\ref{concept:criteria} by means of a \emph{selection utility} objective, which we will use to prefer rankings in which the more qualified candidates are included, and the less qualified, excluded.

\begin{definition}[Selection utility]
	\label{def:selectionFairness}
	The selection utility of a ranking $\tau \in {\mathcal T}_{k,n}$ is $\min_{i \in [n], i \notin \tau} \operatorname{utility}(i,\tau)$.
\end{definition}

Naturally, a ``color-blind'' top-k ranking $c|_k$ maximizes selection utility, i.e., has selection utility zero.

\spara{Ordering utility and in-group monotonicity.}
We operationalize criterion~\ref{cond:ordering} of \S\ref{concept:criteria} by means of an \emph{ordering utility} objective and an \emph{in-group monotonicity constraint}, which we will use to prefer top-$k$ lists in which the more qualified candidates are ranked above the less qualified ones.

\begin{definition}[Ordering utility]
	\label{def:orderingFairness}
	The ordering utility of a ranking $\tau \in {\mathcal T}_{k,n}$ is $\min_{i \in \tau} \operatorname{utility}(i,\tau)$.
\end{definition}

The ordering utility of a ranking is only concerned with the candidate attaining the worst (minimum) ranked individual utility. Instead, the in-group monotonicity constraints refer to all elements, and specifies that both protected and non-protected candidates, independently, must be sorted by decreasing qualifications.

\begin{definition}[In-group monotonicity]
	\label{def:inGroupMonotonicity}
	A ranking $\tau \in {\mathcal T}_{k,n}$ satisfies the in-group monotonicity condition if $\forall i,j$ s.t. $g_i = g_j$, $r(i,\tau) < r(j,\tau) \Rightarrow q_i \ge q_j$.
\end{definition}

Again, the ``color-blind'' top-k ranking $c|_k$ maximizes ordering utility, i.e., has ordering utility zero; it also satisfies the in-group monotonicity constraint.

\spara{Connection to the individual fairness notion.}\label{concept:our-utility-individual-fairness}
Our notion of utility is centered on individuals, for instance by taking the minima instead of averaging.
While other choices are possible, this has the advantage that we can trace loss of utility to specific individuals. These are the people who are ranked below a less qualified candidate, or excluded from the ranking, due to the ranked group fairness constraint.
This is connected to the notion of individual fairness, which requires people to be treated consistently~\cite{Dwork2012}. Under this interpretation, a consistent treatment should require that two people with the same qualifications be treated equally, and any deviation from this is in our framework a utility loss. This allows trade-offs to be made explicit.

\subsection{Formal Problem Statement}
\label{subsec:problem-statement}
The criteria we have described allow for different problem statements, depending on whether we use ranked group fairness as a constraint and maximize ranked utility, or vice versa.

\newtheorem*{problem*}{Problem}
\begin{problem*}[Fair top-k ranking]
	Given a set of candidates $[n]$ and parameters $k$, $p$, and $\alpha$, produce a ranking $\tau \in {\mathcal T}_{k,n}$ that:
	\begin{compactenum}[(i)]
		\item \label{problem:constraint-monotonicity} satisfies the in-group monotonicity constraint;
		\item \label{problem:constraint-rank} satisfies ranked group fairness with parameters $p$ and $\alpha$;
		\item \label{problem:optimal-sel} achieves optimal selection utility subject to (\ref{problem:constraint-monotonicity}) and (\ref{problem:constraint-rank}); and
		\item \label{problem:maximum-ord} maximizes ordering utility subject to (\ref{problem:constraint-monotonicity}), (\ref{problem:constraint-rank}), and (\ref{problem:optimal-sel}).
	\end{compactenum}
\end{problem*}

\spara{Related problems.}\label{concept:related-problems}
Alternative problem definitions are possible with the general criteria described in \S\ref{concept:criteria}.
For instance, instead of maximizing selection and ordering utility, we may seek to keep the utility loss bounded, e.g., producing a ranking that satisfies in-group monotonicity and ranked group fairness, and that produces an $\epsilon$-bounded loss with respect to ordering and/or selection utility.
If the ordering does not matter, we have a {\sc Fair Top-$k$ Selection Problem}, in which we just want to maximize selection utility.
Conversely, if the entire set $[n]$ must be ordered, we have a {\sc Fair Ranking Problem}, in which we just want to maximize ordering utility.
If $k$ is not specified, we have a {\sc Fair Selection Problem}, which resembles a classification problem, and in which the objective might be to  maximize a combination of ranked group fairness, selection utility, and ordering utility.
This multi-objective problem would require a definition of how to combine the different criteria.

\section{Algorithm}\label{sec:algorithms}

We present the \algoFAIR algorithm (\S\ref{subsec:algorithm-description}) and prove it is correct (\S\ref{subsec:algorithm-correctness}).

\subsection{Algorithm Description}\label{subsec:algorithm-description}

\textbf{Algorithm \algoFAIRBF}, presented in Algorithm~\ref{alg:fair}, solves the {\sc Fair Top-$k$ Ranking} problem.
As input, FA*IR takes
the expected size $k$ of the ranking to be returned,
the qualifications $q_i$,
indicator variables $g_i$ indicating if element $i$ is protected,
the target minimum proportion $p$ of protected candidates, and
the adjusted significance level $\alphaadj$.

First, the algorithm uses $q_i$ to create two priority queues with up to $k$ candidates each: $P_0$ for the non-protected candidates and $P_1$ for the protected candidates.
Next (lines \ref{alg:fair:mstart}-\ref{alg:fair:mend}), the algorithm derives a ranked group fairness table similar to Table~\ref{tbl:ranked_group_fairness_table}, i.e., for each position it computes the minimum number of protected candidates, given $p$, $k$ and $\alphaadj$.
Then, \algoFAIR greedily constructs a ranking subject to candidate qualifications, and minimum protected elements required, resembling the method by~\citet{celis2017ranking} for the case of a single protected attribute (the main difference being that we compute the table $m$, while \cite{celis2017ranking} assumes it is given).
If the previously computed table $m$ demands a protected candidate at the current position, the algorithm appends the best candidate from $P_1$ to the ranking (Lines \ref{alg:fair:pstart}-\ref{alg:fair:pend}); otherwise, it appends the best candidate from $P_0 \cup P_1$ (Lines \ref{alg:fair:anystart}-\ref{alg:fair:anyend}).

\algoFAIR has running time $O(n + k \log k)$; which includes building the two $O(k)$ size priority queues from $n$ items and processing them to obtain the ranking, where we assume $k < O(n/\log n)$.
If we already have two ranked lists for both classes of elements, \algoFAIR can avoid the first step and obtain the top-$k$ in $O(k \log k)$ time.
Our method is applicable as long as there is a protected group and there are enough candidates from that group; if there are $k$ from each group, the algorithm is guaranteed to succeed, otherwise the ``head'' of the ranking will satisfy the ranked group fairness constraint, but the ``tail'' of the ranking may not.

\begin{algorithm}[t]
	\caption{Algorithm \algoFAIR finds a ranking that maximizes utility subject to in-group monotonicity and ranked group fairness constraints. Checks for special cases (e.g., insufficient candidates of a class) are not included for clarity.}
	\label{alg:fair}  
	\small
	\AlgInput{$k \in [n]$, the size of the list to return; $\forall~i \in [n]$: $q_i$, the qualifications for candidate $i$, and $g_i$ an indicator that is 1 iff candidate $i$ is protected; $p \in ]0,1[$, the minimum proportion of protected elements; $\alphaadj \in ]0,1[$, the adjusted significance for each fair representation test.}
	\AlgOutput{$\tau$ satisfying the group fairness condition with parameters $p, \sigma$, and maximizing utility.}
	$P_0, P_1 \leftarrow$ empty priority queues with bounded capacity $k$\\
	\For{$i \leftarrow 1$ \KwTo $n$}{
		insert $i$ with value $q_i$ in priority queue $P_{g_i}$ \\
	}
	\For{$i \leftarrow 1$ \KwTo $k$}{ \label{alg:fair:mstart}
		$m[i] \leftarrow F^{-1}(\alphaadj; i, p)$ \\
	}\label{alg:fair:mend}
	$(t_p, t_n) \leftarrow (0, 0)$ \\
	\While{$t_p + t_n < k$}{
		\eIf{$t_p < m[t_p + t_n + 1]$}{
			\AlgComment{add a protected candidate}
			$t_p \leftarrow t_p + 1$ \\ \label{alg:fair:pstart}
			$\tau[t_p + t_n] \leftarrow \operatorname{dequeue}(P_1)$ \\  \label{alg:fair:pend}
		}{
		\AlgComment{add the best candidate available}
		\eIf{$q(\operatorname{peek}(P_1)) \ge q(\operatorname{peek}(P_0))$} { \label{alg:fair:anystart}
			$t_p \leftarrow t_p + 1$ \\
			$\tau[t_p + t_n] \leftarrow \operatorname{dequeue}(P_1)$ \\

		}{
		$t_n \leftarrow t_n + 1$ \\
		$\tau[t_p + t_n] \leftarrow \operatorname{dequeue}(P_0)$ \\
	} \label{alg:fair:anyend}
}

}
\Return{$\tau$}
\end{algorithm}
\vspace{-3mm}

\subsection{Algorithm Correctness}\label{subsec:algorithm-correctness}

By construction, a ranking $\tau$ generated by \algoFAIR satisfies in-group monotonicity, because protected and non-protected candidates are selected by decreasing qualifications.
It also satisfies the ranked group fairness constraint, because for every prefix of size $i$ the list, the number of protected candidates is at least $m[i]$.
What we must prove is that $\tau$ achieves optimal selection utility, and that it maximizes ordering utility.
This is done in the following lemmas.

\begin{lemma}\label{lemma:across}
	If a ranking satisfies the in-group monotonicity constraint, then the utility loss (ordering or selection utility different from zero) can only happen across protected/non-protected groups.
\end{lemma}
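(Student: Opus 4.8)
The plan is to unwind the definition of ranked utility and argue by contradiction. By Definition~\ref{def:rankedIndividualFairness}, an element $i$ incurs a utility loss precisely when $\operatorname{utility}(i,\tau)<0$, i.e.\ when $\overline{q}\triangleq\min_{j:\,r(j,\tau)<r(i,\tau)}q_j<q_i$, and this is the only way that either the ordering utility $\min_{i\in\tau}\operatorname{utility}(i,\tau)$ (the case $i\in\tau$) or the selection utility $\min_{i\notin\tau}\operatorname{utility}(i,\tau)$ (the case $i\notin\tau$) can become nonzero. So I would fix an arbitrary $i$ with $\operatorname{utility}(i,\tau)<0$ and let $j^\star$ attain the minimum defining $\overline{q}$, so that $r(j^\star,\tau)<r(i,\tau)$ and $q_{j^\star}<q_i$. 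Reading through the definitions, ``the loss happens across groups'' means exactly that the loss-triggering element $j^\star$ ranked above $i$ lies in the opposite group, i.e.\ $g_{j^\star}\neq g_i$; this is what I aim to establish.

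The key step is the contradiction. Suppose instead $g_{j^\star}=g_i$. Then the pair $(j^\star,i)$ is governed by the in-group monotonicity constraint (Definition~\ref{def:inGroupMonotonicity}), and since $r(j^\star,\tau)<r(i,\tau)$ that constraint forces $q_{j^\star}\ge q_i$, directly contradicting $q_{j^\star}<q_i$. Hence $g_{j^\star}\neq g_i$, and every utility loss is charged to a cross-group pair, as claimed.

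The one point requiring care --- and the main, though mild, obstacle --- is the selection-utility case $i\notin\tau$, where $r(i,\tau)=|\tau|+1=k+1$ and the minimum defining $\overline{q}$ ranges over all of $\tau$, so that $j^\star\in\tau$ while $i\notin\tau$. Here I would verify that in-group monotonicity is quantified over all of $[n]$, not merely over the elements of $\tau$: for a same-group pair with $j^\star\in\tau$ and $i\notin\tau$ we still have $r(j^\star,\tau)\le k<k+1=r(i,\tau)$, so the implication fires and yields $q_{j^\star}\ge q_i$. Once this is checked, the single contradiction argument above covers $i\in\tau$ and $i\notin\tau$ uniformly, completing the proof.
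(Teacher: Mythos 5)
Your proposal is correct and is essentially the paper's own argument: the paper proves the lemma in one line by observing that, by Definition~\ref{def:inGroupMonotonicity}, the only way to have $r(i,\tau)<r(j,\tau)$ with $q_i<q_j$ is $g_i\neq g_j$, which is exactly your contrapositive step applied to the witness $j^\star$. Your additional check that the constraint also fires in the selection-utility case ($i\notin\tau$, $r(i,\tau)=k+1$) is a detail the paper's one-line proof glosses over, and it is handled correctly since the rank function and the in-group monotonicity quantifier indeed cover all of $[n]$.
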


\begin{proof}
	This comes directly from Definition~\ref{def:inGroupMonotonicity} given that for two elements $i,j$, the only case in which $r(i,\tau) < r(j,\tau) \wedge q_i < q_j$ is when $g_i \ne g_j$.
\end{proof}

\begin{lemma}
	The optimal selection utility among rankings satisfying in-group monotonicity (\ref{problem:constraint-monotonicity}) and ranked group fairness (\ref{problem:constraint-rank}), is either zero, or is due to a non-protected candidate ranked below a less qualified protected candidate.
\end{lemma}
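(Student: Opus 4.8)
The plan is to establish the claim by a local-exchange argument: I want to show that an optimal ranking never ``pays'' for fairness by dropping a protected candidate in favor of a less qualified non-protected one, so that the only admissible source of loss is the reverse configuration, which is exactly the one asserted.

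First I would record the shape of selection utility. By Definition~\ref{def:selectionFairness} together with the definition of ranked utility, the selection utility of any $\tau$ equals $\min\!\big(0,\ \min_{j\in\tau}q_j-\max_{i\notin\tau}q_i\big)$, since every candidate outside $\tau$ is ranked below all of $\tau$. Hence the loss is nonzero precisely when the best excluded candidate $i^\star=\arg\max_{i\notin\tau}q_i$ strictly beats the worst included candidate $w=\arg\min_{j\in\tau}q_j$. If the value is zero we are done, so assume $q_{i^\star}>q_w$. In-group monotonicity (Definition~\ref{def:inGroupMonotonicity}) forces $q_w\ge q_{i^\star}$ whenever $g_w=g_{i^\star}$ (this is the content of Lemma~\ref{lemma:across}), so necessarily $g_w\neq g_{i^\star}$. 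Thus either (2a) $i^\star$ is protected and $w$ non-protected, or (2b) $i^\star$ is non-protected and $w$ protected; case (2b) is exactly the claimed configuration, so the whole lemma reduces to ruling out case (2a) at the optimum.

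To exclude (2a) I would swap. Suppose an optimal $\tau$ contains a non-protected candidate $j$ while a strictly more qualified protected candidate $i$ is excluded ($q_i>q_j$). Form $S'=(\tau\setminus\{j\})\cup\{i\}$ and order it with all protected elements first, each group internally sorted by quality; call the result $\tau'$. Then: (a) $\tau'$ satisfies in-group monotonicity by construction; (b) $\tau'$ satisfies ranked group fairness, because if $a$ denotes the number of protected elements of $\tau$, every length-$\ell$ prefix of $\tau$ carries at most $\min(\ell,a)$ protected elements, so $m[\ell]\le\min(\ell,a)$ for all $\ell\in[1,k]$, whereas the protected-first ordering of $S'$ (which has $a{+}1$ protected elements) carries exactly $\min(\ell,a{+}1)\ge m[\ell]$ in each prefix; and (c) since we removed an element of quality $q_j$ and inserted one of strictly larger quality $q_i$, we get $\min_{j'\in S'}q_{j'}\ge\min_{j'\in\tau}q_{j'}$ and $\max_{i'\notin S'}q_{i'}\le\max_{i'\notin\tau}q_{i'}$, so the selection utility of $\tau'$ does not decrease. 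Consequently case (2a) can never \emph{strictly} improve, and a satisfying ranking of no worse selection utility always exists.

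The remaining subtlety is ties: when the bottom of $\tau$ or the top of the excluded set is tied, a single swap may leave selection utility unchanged, so one exchange need not yield a strict contradiction. I would resolve this by choosing, among all rankings of optimal selection utility, one that additionally maximizes $\sum_{i\in\tau}q_i$. The swap above strictly increases this total by $q_i-q_j>0$ while keeping selection utility optimal and constraints (\ref{problem:constraint-monotonicity}) and (\ref{problem:constraint-rank}) satisfied, contradicting maximality. Hence this distinguished optimal ranking contains no excluded protected candidate outranking an included non-protected one, so case (2a) is impossible for it, and its (negative) loss is realized by case (2b): a non-protected candidate ranked below a less qualified protected candidate. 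I expect step (b) — verifying that re-sorting protected-first preserves ranked group fairness — together with the tie-breaking potential argument to be the only places needing care; everything else is routine bookkeeping.
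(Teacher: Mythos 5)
Your proof is correct and follows the same core exchange argument as the paper's: assume the loss is witnessed by an included non-protected candidate and an excluded, strictly more qualified protected candidate, swap them, check that in-group monotonicity and ranked group fairness survive, and derive a contradiction with optimality. There are two points where you diverge, both to your advantage. First, feasibility of the swap: the paper reinserts the protected candidate $j$ at the old position of $i$ (noting $i$ is the last non-protected element by in-group monotonicity) and slides $j$ down among the protected elements below, arguing prefix protected counts never decrease; you instead re-sort the whole set protected-first and verify fairness via $m[\ell]\le\min(\ell,a)<\min(\ell,a+1)$, which is a slightly blunter but easier-to-check certificate, and is legitimate because the fair representation condition is monotone in the prefix count of protected elements. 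Second, and more substantively, the paper asserts the swap yields ``a larger selection utility,'' which is false under ties (if the bottom-of-$\tau$ quality or the top-of-excluded quality is attained multiple times, the value $\min_{j\in\tau}q_j-\max_{i\notin\tau}q_i$ can be unchanged); your secondary potential $\sum_{i\in\tau}q_i$, maximized among optimal rankings, turns the non-strict improvement into a strict contradiction and genuinely patches this gap. You also make explicit the appeal to Lemma~\ref{lemma:across} to rule out a same-group witness pair, which the paper leaves implicit. The only caveat is interpretive rather than mathematical: your argument establishes the claim for a distinguished (potential-maximizing) optimal ranking, which is the natural rigorous reading of the lemma's informal ``is due to'' phrasing, and it is all that the subsequent lemmas about \algoFAIR actually need.
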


\begin{proof}
	Let $i,j$ be the two elements that attain the optimal selection utility, with $i \in \tau, j \in [n] \backslash \tau$.
	We will prove by contradiction: let us assume $i$ is a non-protected element ($g_i=0$) and $j$ is a protected element ($g_j=1$).
	By in-group monotonicity, we know $i$ is the last non-protected element in $\tau$. Let us swap $i$ and $j$, moving $i$ outside $\tau$ and $j$ inside the ranking, and then moving down $j$ if necessary to place it in the correct ordering among the protected elements below its position (given that $i$ is the last non-protected element in $\tau$).
	The new ranking continues to satisfy in-group monotonicity as well as ranked group fairness (as it has not decreased the number of protected elements at any position in the ranking), and has a larger selection utility.
	This is a contradiction because the selection utility was optimal. Hence, $i$ is a protected element and $j$ a non-protected element.
\end{proof}

\begin{lemma}\label{lemma:number-protected-implies-selfairness} 
	Given two rankings $\rho, \tau$ satisfying in-group monotonicity (\ref{problem:constraint-monotonicity}), if they have the same number of protected elements $\rho_p = \tau_p$, then both rankings contain the same $k$ elements (possibly in different order), and hence both rankings have the same selection utility.
\end{lemma}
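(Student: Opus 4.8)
The plan is to exploit the fact that the in-group monotonicity constraint, read as a statement about \emph{all} pairs of same-group elements and not only those inside the ranking, pins down exactly which elements of each group are selected. Recall that $r(j,\tau) = |\tau| + 1 = k+1$ whenever $j \notin \tau$, so for a protected element $i \in \tau$ and a protected element $j \notin \tau$ we have $r(i,\tau) \le k < k+1 = r(j,\tau)$; Definition~\ref{def:inGroupMonotonicity} then forces $q_i \ge q_j$. Hence every protected element appearing in the ranking is at least as qualified as every protected element left out, and the same argument applied to the case $g=0$ gives the analogous statement for non-protected elements.

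First I would use this observation to conclude that the protected elements of $\tau$ are precisely the $\tau_p$ most qualified protected candidates, and its non-protected elements are precisely the $k - \tau_p$ most qualified non-protected candidates. Since every ranking in $\mathcal{T}_{k,n}$ has exactly $k$ elements, fixing $\tau_p$ simultaneously fixes the number of non-protected elements at $k - \tau_p$. Thus from $\rho_p = \tau_p$ I would deduce that $\rho$ and $\tau$ draw the same number of candidates from each group and, by the monotonicity argument above, draw the top-ranked candidates of each group; consequently both rankings contain the same set of $k$ elements, possibly in a different order, which is the first assertion.

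For the second assertion I would observe that selection utility is a function of the element set alone. Indeed, for $i \notin \tau$ we have $\{ j : r(j,\tau) < r(i,\tau) \} = \tau$, so $\overline{q} = \min_{j \in \tau} q_j$; taking the minimum of $\operatorname{utility}(i,\tau)$ over $i \notin \tau$ yields $\min\{0,\; \min_{j \in \tau} q_j - \max_{i \notin \tau} q_i\}$, which depends only on the smallest qualification inside the ranking and the largest qualification outside it. Both quantities are determined by the shared element set, so $\rho$ and $\tau$ have equal selection utility.

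The main subtlety, and the step I expect to require care, is the treatment of ties in qualification: when several candidates of a group share the same quality at the selection boundary, ``the top-$\tau_p$ protected candidates'' need not be a unique set, so the two rankings might technically contain different tied elements. This does not threaten the conclusion about selection utility, since the relevant quantities $\min_{j \in \tau} q_j$ and $\max_{i \notin \tau} q_i$ are determined by the multiset of qualities rather than by element identities; but to make the ``same $k$ elements'' statement literal I would either assume distinct qualifications or interpret equality of element sets up to ties, as is standard in this setting.
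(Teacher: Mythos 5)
Your proof is correct and takes essentially the same route as the paper's: the paper likewise argues that both rankings must consist of the top-$\tau_p$ protected candidates and the top-$(k-\tau_p)$ non-protected candidates, hence contain the same elements and therefore have the same selection utility --- you merely make explicit two steps the paper leaves tacit, namely that this follows from applying Definition~\ref{def:inGroupMonotonicity} with the convention $r(j,\tau) = k+1$ for $j \notin \tau$, and that selection utility reduces to $\min\{0,\, \min_{j \in \tau} q_j - \max_{i \notin \tau} q_i\}$, a function of the element set alone. Your caveat about ties is a genuine refinement the paper's one-line proof silently ignores: with equal qualifications at the selection boundary the ``same $k$ elements'' claim only holds up to swapping tied candidates, though, as you correctly note, the selection-utility conclusion survives because it depends only on the multiset of qualifications.
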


\begin{proof}
	Both rankings contain a prefix of size $\tau_p$ of the list of protected candidates ordered by decreasing qualifications, and a prefix of size $k - \tau_p$ of the list of non-protected candidates ordered by decreasing qualifications.
	Hence, $\forall i \in [n], i \in \tau \Leftrightarrow i \in \rho$, so the elements not included in the rankings are also the same elements, and the selection utility of both rankings is the same.
\end{proof}

The previous lemma means selection utility is determined by the number of protected candidates in a ranking.

\begin{lemma}\label{lemma:fair-optimal-selection}
	Algorithm \algoFAIR achieves optimal selection utility among rankings satisfying in-group monotonicity (\ref{problem:constraint-monotonicity}) and ranked group fairness (\ref{problem:constraint-rank}).
\end{lemma}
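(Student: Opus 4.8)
The plan is to reduce the claim to a statement about a single integer---the number of protected candidates in the output---and then to show that \algoFAIR{} selects the best feasible value of this integer. By Lemma~\ref{lemma:number-protected-implies-selfairness}, among rankings satisfying in-group monotonicity the selection utility is completely determined by the protected count $\tau_p$, so write $\mathit{su}(t)$ for the common selection utility of every in-group-monotone ranking with $t$ protected elements. Since \algoFAIR{} already satisfies in-group monotonicity and ranked group fairness by construction, it suffices to prove that its protected count $t^{*}$ maximizes $\mathit{su}(t)$ over all \emph{feasible} $t$, i.e.\ over all $t$ for which some in-group-monotone ranking with $t$ protected elements also satisfies ranked group fairness.

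Next I would pin down the shape of $\mathit{su}(t)$ and the feasible range. Since the maximum attainable ranked utility of any element is zero, $\mathit{su}(t)\le 0$ for all $t$; and writing $c_p$ for the number of protected elements in the color-blind prefix $c|_k$, the ranking with $t=c_p$ protected elements is exactly $c|_k$, so $\mathit{su}(c_p)=0$ is a global maximum. Moreover $\mathit{su}$ is non-increasing for $t\ge c_p$: raising $t$ by one swaps the marginal included non-protected candidate for the next, strictly less qualified, protected candidate, which by the lemma on the source of selection-utility loss (any nonzero loss is due to a non-protected candidate excluded in favor of a less qualified included protected candidate) introduces or widens exactly the inversion that determines the loss. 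On the feasibility side, in-group monotonicity lets protected candidates be placed arbitrarily early, so every prefix requirement $m[i]\le m[k]$ can be met precisely when $t\ge m[k]$; hence the feasible values are $t\ge m[k]$, and the optimum is $t^{*}_{\mathrm{opt}}=\max(c_p,m[k])$ (the peak $c_p$ if it is feasible, otherwise the left end $m[k]$).

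It remains to show \algoFAIR{} attains $t^{*}=t^{*}_{\mathrm{opt}}$. The algorithm adds a protected candidate either in a \emph{forced} step, when $t_p<m[t_p+t_n+1]$, or in a \emph{voluntary} step, when the best remaining protected candidate is at least as qualified as the best remaining non-protected one. The forced steps raise the count only up to the prefix requirements, so together they contribute at most $m[k]$ protected elements, and never more than the minimum that any feasible ranking must contain. A voluntary step selects exactly the candidate a color-blind ranking would, and---by Lemma~\ref{lemma:across} together with the lemma on the source of selection-utility loss---can never create a non-protected candidate ranked below a less qualified protected one; consequently voluntary steps do not push the count past $c_p$ and do not lower $\mathit{su}$. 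Combining the two, \algoFAIR{}'s count equals $\max(c_p,m[k])=t^{*}_{\mathrm{opt}}$, so its selection utility is optimal. The main obstacle is precisely this last step: because forcing reorders the greedy choices, \algoFAIR{}'s notion of ``best available'' at a given position is not the color-blind one, so one must verify carefully that the forced and voluntary additions never jointly introduce a \emph{superfluous} protected candidate beyond $\max(c_p,m[k])$---which, by Lemma~\ref{lemma:number-protected-implies-selfairness}, would strictly decrease $\mathit{su}$ and break optimality.
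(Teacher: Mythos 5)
Your route is genuinely different from the paper's: the paper never writes down an explicit formula for the optimal protected count, nor the unimodality of $\mathit{su}(t)$. Instead it compares \algoFAIR's output $\tau$ directly against an optimal ranking $\tau^{*}$ and derives contradictions in two cases: if $\tau_p<\tau^{*}_p$, swapping the least qualified protected element of $\tau^{*}$ with the best excluded non-protected element improves selection utility while preserving both constraints, contradicting optimality of $\tau^{*}$; if $\tau_p>\tau^{*}_p$, then at the position of the least qualified protected element of $\tau$ a better non-protected element was available and (by feasibility of $\tau^{*}$) the fairness table did not force a protected pick, contradicting the greedy rule. Your skeleton is sound and more quantitative: the reduction via Lemma~\ref{lemma:number-protected-implies-selfairness} to a single integer $t$, the feasibility characterization $t\ge m[k]$ (placing protected elements first works because the fair-representation test is one-sided), and the unimodality of $\mathit{su}$ with peak $0$ at $c_p$ are all correct.

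There is, however, a genuine gap at the crux, which you flag but do not close: the identity $t^{*}=\max(c_p,m[k])$ is asserted, not proved, and two distinct arguments are missing. First, for the \emph{upper} bound, ``a voluntary step selects exactly the candidate a color-blind ranking would'' is not literally true once forced steps have consumed protected candidates, and Lemma~\ref{lemma:across} together with the lemma attributing selection loss to a non-protected candidate ranked below a less qualified protected one concern utility, not counts. The clean argument inspects the \emph{last} protected insertion: writing $a_1\ge a_2\ge\cdots$ for protected qualities and $b_1\ge b_2\ge\cdots$ for non-protected ones, if that insertion was forced then $t^{*}\le m[k]$ (since $m$ is non-decreasing), while if it was voluntary with $t_n$ non-protected already placed then $a_{t^{*}}\ge b_{t_n+1}\ge b_{k-t^{*}+1}$, which forces $t^{*}\le c_p$ up to tie-breaking. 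Second---and this your sketch omits entirely---optimality also requires the \emph{lower} bound $t^{*}\ge c_p$ whenever $c_p>m[k]$: by your own unimodality, ending with $m[k]\le t^{*}<c_p$ would give $\mathit{su}(t^{*})$ possibly strictly negative while $\mathit{su}(c_p)=0$, so ``not overshooting'' alone does not yield optimality. This bound holds because non-protected candidates are only ever placed through the head-to-head comparison: if $t^{*}<c_p$, then at the step where the $(k-t^{*})$-th non-protected element was dequeued the protected head had quality at least $a_{t^{*}+1}\ge b_{k-t^{*}}$ (both from the definition of $c_p$, with ties resolved toward the protected side by the algorithm's $\ge$), so \algoFAIR would have taken the protected candidate instead---a contradiction. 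With these two insertions (and a remark that quality ties make $c_p$ well-defined only up to tie-breaking, which is harmless since $\mathit{su}$ depends only on the qualities), your proof closes and becomes a valid, more explicit alternative to the paper's exchange argument.
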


\begin{proof}
	Let $\tau$ be the ranking produced by \algoFAIR, and $\tau^*$ be the ranking achieving the optimal selection utility. We will prove that $\tau_p = \tau^*_p$ by contradiction.
	Suppose $\tau_p < \tau^*_p$. Then, we could take the least qualified protected element in $\tau^*_p$ and swap it with the most qualified non-protected element in $[n] \backslash \tau^*_p$, re-ordering as needed. This would increase selection utility and still satisfy the constraints, which is a contradiction with the fact that $\tau^*_p$ achieved the optimal selection utility.
	Suppose $\tau_p > \tau^*_p$. Then, at the position at which the least qualified protected element in $\tau$ is found, we could have placed a non-protected element with higher qualifications, as $\tau^*$ satisfies ranked group fairness and has less protected elements. This is a contradiction with the way in which \algoFAIR operates, as it only places a protected element with lower qualifications when needed to satisfy ranked group fairness.
	Hence, $\tau_p = \tau^*_p$ and by Lemma~\ref{lemma:number-protected-implies-selfairness} it achieves the same selection utility.
\end{proof}

\begin{lemma}
	Algorithm \algoFAIR maximizes ordering utility among rankings satisfying in-group monotonicity (\ref{problem:constraint-monotonicity}), ranked group fairness (\ref{problem:constraint-rank}), and achieving optimal selection utility (\ref{problem:optimal-sel}).
\end{lemma}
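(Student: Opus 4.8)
The plan is to reduce the statement to a comparison between merges of two fixed sorted lists, and then to argue optimality by a local exchange argument. Write $\tau_p(i)$ for the number of protected elements among the first $i$ positions of a ranking. First I would fix the underlying set of candidates: by Lemma~\ref{lemma:fair-optimal-selection}, the output $\tau$ of \algoFAIR\ has the same number of protected elements as any ranking $\tau^*$ satisfying conditions (\ref{problem:constraint-monotonicity})--(\ref{problem:optimal-sel}), and by Lemma~\ref{lemma:number-protected-implies-selfairness} the two therefore contain exactly the same $k$ candidates. In-group monotonicity fixes the internal order of the protected list $P_1 \ge P_2 \ge \cdots$ and of the non-protected list $N_1 \ge N_2 \ge \cdots$, so $\tau$ and $\tau^*$ are both interleavings of these two sorted sequences, differing only in how they are merged, subject to every prefix of length $i$ holding at least $m[i]$ protected elements. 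By Lemma~\ref{lemma:across}, every non-zero term of $\operatorname{utility}(\cdot,\tau)$ arises from a cross-group pair, so the ordering utility $\min_{i \in \tau} \operatorname{utility}(i,\tau)$ is governed entirely by the cross-group inversions of the merge.

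The core is an elementary-exchange lemma: swapping an adjacent pair in which a protected element sits immediately above a strictly more qualified non-protected element, replacing it by the reverse order, never decreases any individual's ranked utility, and hence never decreases the ordering utility. I would verify this by the usual three-way case split: the promoted non-protected element only loses a less-qualified element from the set ranked above it, so its utility cannot drop; the demoted protected element only gains a more-qualified element above it, so its utility stays at $0$; and every other element keeps the same unordered set of elements above it, so its utility is unchanged. The mirror swap, pulling a protected element up past a less qualified non-protected element currently above it, is symmetric, and moreover is always compatible with ranked group fairness because it can only raise prefix counts of protected elements.

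Finally I would transform $\tau^*$ into $\tau$ by such swaps without ever decreasing ordering utility, using the characterization that \algoFAIR\ places every protected element as late as the fairness staircase permits: it emits a protected candidate only when $t_p < m[t_p+t_n+1]$ (so demoting it would break $\tau_p(i) \ge m[i]$) or when that candidate is the best remaining one (so no more qualified non-protected element lies below it). Taking the first position $i$ at which $\tau^*$ and $\tau$ disagree, I split into two cases. If \algoFAIR\ placed a protected element at $i$ while $\tau^*$ placed a non-protected one, the forced subcase is impossible, since $\tau_p(i-1) = \tau^*_p(i-1) < m[i]$ would force $\tau^*_p(i) < m[i]$ and violate fairness; hence \algoFAIR\ emitted a best-available protected element there, and the mirror swap pulls it up into place, always admissibly. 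If \algoFAIR\ placed a non-protected element at $i$ while $\tau^*$ placed a protected one, I would rotate the relevant non-protected element up to position $i$ by a run of elementary swaps past the intervening, less qualified, protected elements; each step stays above threshold by the slack estimate $m[j] \le m[i] + (j-i) \le \tau_p(i-1) + (j-i) = \tau^*_p(j) - 1$, valid for every intermediate prefix $j$. Repeating this on successive disagreements drives $\tau^*$ to $\tau$; since ordering utility never decreased and $\tau^*$ was optimal, $\tau$ attains the optimum.

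The step I expect to be the main obstacle is exactly this slack bookkeeping: ensuring that the realigning swap sequence never transiently violates ranked group fairness. The crucial quantitative ingredient is that the required-count staircase $m[i] = F^{-1}(\alphaadj;i,p)$ is non-decreasing and rises by at most one between consecutive positions, which is precisely what makes the inequality $m[j] \le m[i] + (j-i)$ hold and thereby supplies the needed slack.
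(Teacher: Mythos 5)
Your proof is correct, and it follows the paper's skeleton up to a point --- it reduces, via Lemmas~\ref{lemma:number-protected-implies-selfairness} and \ref{lemma:fair-optimal-selection}, to rankings over the same $k$ candidates with the same number of protected elements, and then uses Lemma~\ref{lemma:across} to restrict attention to cross-group inversions --- but at the decisive step it takes a genuinely more rigorous route. The paper's own proof ends with a single informal assertion: that \algoFAIR\ places a protected element above a more qualified non-protected one only when forced by the fairness table, and that any feasible ranking must have at least as many protected elements in every prefix. It never actually compares \algoFAIR's merge against an arbitrary feasible merge. You supply exactly that missing argument: an adjacent-exchange lemma showing neither swap direction decreases any individual's ranked utility (hence not the minimum), followed by a prefix-by-prefix transformation of an optimal $\tau^*$ into $\tau$, where the ``forced'' sub-case is ruled out because the shared prefix would make $\tau^*$ itself violate fairness, and the rotation in the other case is controlled by the staircase property $m[i] \le m[i+1] \le m[i]+1$. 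That property does hold (by the coupling $X_{i+1} = X_i + B$ with $B \in \{0,1\}$ for binomial variables, so $F(x;i+1,p) \le F(x;i,p) \le F(x+1;i+1,p)$), though you assert it rather than prove it. What your route buys is an actual proof where the paper offers a plausible sketch; the cost is length and bookkeeping.

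Two minor points, neither a gap. First, your feasibility tracking of intermediate rankings --- the slack estimate you call the main obstacle --- is dispensable: optimality only requires that the swap chain ends at $\tau$ with ordering utility non-decreasing at each step; the intermediate merges need not satisfy ranked group fairness at all. Second, two small imprecisions in the exchange lemma: the demoted element's utility is \emph{unchanged}, not necessarily zero (it may already be negative due to an element higher up, and remains so), and since \algoFAIR\ breaks ties toward the protected queue, in the case where it voluntarily emits a protected candidate the intervening non-protected elements in $\tau^*$ are only \emph{weakly} less qualified, so the mirror swap must cover equality --- which your case split in fact does, as the strict inequality $\overline{q} < q_i$ in Definition~\ref{def:rankedIndividualFairness} keeps the promoted element's utility at zero under ties.
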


\begin{proof}
	By lemmas~\ref{lemma:number-protected-implies-selfairness} and \ref{lemma:fair-optimal-selection} we know that satisfying the constraints and achieving optimal selection utility implies having a specific number of protected elements $\tau^*_p$.
	Hence, we need to show that among rankings having this number of protected elements, \algoFAIR achieves the maximum ordering utility.
	By Lemma~\ref{lemma:across} we know that loss of ordering utility is due only to non-protected elements placed below less qualified protected elements.
	However, we know that in \algoFAIR this only happens when necessary to satisfy ranked group fairness, and having less protected elements at any given position than the ranking produced by \algoFAIR would violate the ranked group fairness constraint.
\end{proof}

\section{Experiments}\label{sec:experiments}

In the first part of our experiments we create synthetic datasets to demonstrate the correctness of the adjustment done by Algorithm \algoCorrect (\S\ref{subsubsec:JuliaExperimentalVerification}).
In the second part, we consider several public datasets, as well as new datasets that we make public, for evaluating algorithm \algoFAIR (datasets in \S\ref{sec:experiments-datasets}, metrics and comparison with baselines in \S\ref{sec:experiments-baselines}, and results in \S\ref{sec:experiments-results}).

\subsection{Verification of Multiple Tests Adjustment}
\label{subsubsec:JuliaExperimentalVerification}

We empirically verified the adjustment formula and the \algoCorrect method using randomly generated data.
We repeatedly generated multiple rankings of different lengths $k$ using the algorithm by \citet{yang2016measuring} and evaluated these rankings with our ranked group fairness test, determining the probability that this ranking, which we consider fair, was declared unfair.
Example results are shown on Figure~\ref{fig:julia-experimental-verification} for some combinations of $k$ and $\alphaadj$.
As expected, the experiment results closely resemble the output of \algoCorrect.

\begin{figure}[t]
	\includegraphics[width=.55\columnwidth]{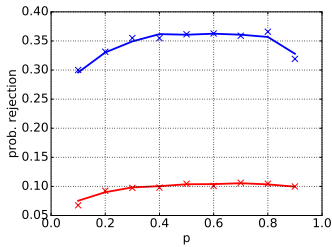}
	\vspace{-2mm}
	\caption{Probability of considering a fair ranking generated by~\cite{yang2016measuring} as unfair for $k=1,000; \alphaadj=0.01$ (bottom curve) and for $k=1,500; \alphaadj=0.05$ (top curve). Model represented by lines, experimental results (avg. of 10,000 runs) by crosses.}
	\vspace{-5mm}
	\label{fig:julia-experimental-verification}
\end{figure}

\begin{table}[t]
	\caption{Datasets and experimental settings.}
	\vspace{-3mm}
	\label{tbl:datasets}
	\resizebox{1.01\columnwidth}{!}{%
		\centering\begin{tabular}{clcccccc}\toprule
			&        &                         &                         & Quality   & Protected & Protected \\
			& Dataset & \multicolumn{1}{c}{$n$} & \multicolumn{1}{c}{$k$} & criterion & group     & \% \\ \midrule
			D1 & COMPAS \cite{angwin_2016_machine}& 18K  & 1K & $\neg$recidivism & Afr.-Am. & 51.2\% \\
			D2 & "  & "  & " & " & male & 80.7\%\\
			D3 & "  & "  & " & " & female & 19.3\%\\
			D4 & Ger. credit \cite{lichman_2013_uci} & 1K   & 100   & credit rating & female & 69.0\% \\
			D5 & " & " & " & " & $<$ 25 yr. & 14.9\% \\
			D6 & "  & " & " & " & $<$ 35 yr. & 54.8\% \\
			D7 & SAT \cite{sat_2014}   & 1.6 M & 1.5K  & test score  & female & 53.1\%  \\
			D8 & XING [ours]           & 40 & 40 & ad-hoc score  & f/m/f & 27/43/27\%  \\
			\bottomrule
		\end{tabular}
	}
	\vspace{-3mm}
\end{table}

\subsection{Datasets}\label{sec:experiments-datasets}

Table~\ref{tbl:datasets} summarizes the datasets used in our experiments.
Each dataset contains a set of people with demographic attributes, plus a quality attribute.
For each dataset, we consider a value of $k$ that is a small round number  ({\em e.g.}, 100, 1,000, or 1,500), or $k=n$ for a small dataset.
For the purposes of these experiments, we considered several scenarios of protected groups.
We remark that the choice of protected group is not arbitrary: it is determined completely by law or voluntary commitments; for the purpose of experimentation we test different scenarios, but in a real application there is no ambiguity about which is the protected group and what is the minimum proportion.
An experiment consists of generating a ranking using \algoFAIR and then comparing it with baseline rankings according to the metrics introduced in the next section.

We used the two publicly-available datasets used in \cite{yang2016measuring} (COMPAS~\cite{angwin_2016_machine} and German Credit~\cite{lichman_2013_uci}), plus another publicly available dataset (SAT~\cite{sat_2014}), plus a new dataset created and released with this paper (XING), as we describe next.

\spara{COMPAS} (Correctional Offender Management Profiling for Alternative Sanctions) is an assessment tool for predicting recidivism based on a questionnaire of 137 questions. It is used in several jurisdictions in the US, and has been accused of racial discrimination by producing a higher likelihood to recidivate for African Americans~\cite{angwin_2016_machine}.
In our experiment, we test a scenario in which we want to create a fair ranking of the top-$k$ people who are least likely to recidivate, who could be, for instance, considered for a pardon or reduced sentence.
We observe that African Americans as well as males are given a larger recidivism score than other groups; for the purposes of this experiment we select these two categories as the protected groups.

\spara{German Credit} is the Statlog German Credit Data collected by Hans Hofmann~\cite{lichman_2013_uci}.
It is based on credit ratings generated by Schufa, a German private credit agency based on a set of variables for each applicant, including age, gender, marital status, among others. Schufa Score is an essential determinant for every resident in Germany when it comes to evaluating credit rating before getting a phone contract, a long-term apartment rental or almost any loan.
We use the credit-worthiness as qualification, as~\cite{yang2016measuring}, and note that women and younger applicants are given lower scores; for the purposes of these experiments, we use those groups as protected.

\spara{SAT} corresponds to scores in the US Scholastic Assessment Test, a standardized test used for college admissions in the US.
We generate this data using the actual distribution of SAT results from 2014, which is publicly available for 1.6 million applicants in fine-grained buckets of 10 points (out of a total of 2,400 points)~\cite{sat_2014}.
The qualification attribute is set to be the achieved SAT score, and the protected group is women (female students), who scored about 25 points lower on average than men in this test.

\spara{XING} (\url{https://www.xing.com/}) is a career-oriented website from which we automatically collected the top-40 profiles returned for 54 queries, using three for which there is a clear difference between top-10 and top-40.
We used a non-personalized (not logged in) search interface and confirmed that it yields the same results from different locations.\label{concept:XING}
For each profile, we collected gender, list of positions held, list of education details, and the number of times each profile has been viewed in the platform, which is a measure of popularity of the profile.
With this information, we constructed an ad-hoc score:
the months of work experience 
plus the months of education, %
multiplied by the number of views of the profile.
This score tends to be somewhat higher for profiles in the first positions of the search results, but in general does not approximate the proprietary ordering in which profiles are shown. 
We include this score and its components in our anonymized data release.
We use the appropriate gender for each query as the protected group.

\subsection{Baselines and Metrics}\label{sec:experiments-baselines}

For each dataset, we generate various top-$k$ rankings with varying targets of minimum proportion of protected candidates $p$ using \algoFAIR, plus two baseline rankings:

\spara{Baseline 1: Color-blind ranking.} The ranking $c|_k$ that only considers the qualifications of the candidates, without considering group fairness, as described in Section~\ref{concept:color-blind-ranking}.

\spara{Baseline 2: \citet{Feldman2015}.} This ranking method aligns the probability distribution of the protected candidates with the non-protected ones. Specifically, for a candidate $i$ in the protected group, we replace its score $q_i \leftarrow q_j$ by choosing a candidate $j$ in the non-protected group having $F_n(j) = F_p(i)$, with $F_p(\cdot)$ (respectively, $F_n(\cdot))$ being the quantile of a candidate among the protected (respectively, non-protected) candidates.

\spara{Utility.} We report the loss in ranked utility after score normalization, in which all $q_i$ are normalized to be within $[0, 1]$.
We also report the maximum rank drop, {\em i.e.}, the number of positions lost by the candidate that realizes the maximum ordering utility loss.

\spara{NDCG.}
We report a normalized weighted summation of the quality of the elements in the ranking, $\sum_{i=1}^{k} w_i q_{(\tau_i)}$, in which the weights are chosen to have a logarithmic discount in the position:  $w_i = \frac{1}{\log_2 (i+1)}$. This is a standard measure to evaluate search rankings~\cite{jarvelin2002cumulated}.
This is normalized so that the maximum value is $1.0$.

\subsection{Results}\label{sec:experiments-results}

\begin{table}[t]
	\caption{Experimental results, highlighting in boldface the best non-color-blind result. Both FA*IR and the baseline from \citeauthor{Feldman2015} achieve the same target proportion of protected elements in the output and the same selection unfairness, but in general FA*IR achieves it with less ordering unfairness, and with less maximum rank drop (the number of positions that the most unfairly ordered element drops).}
	\vspace{-3mm}
	\label{tbl:results}
	\resizebox{1.01\columnwidth}{!}{%
		\centering\begin{tabular}{llcccccc}\toprule
			&        & \% Prot. &       & Ordering     & Rank & Selection \\
			& Method & output   & NDCG  & utility loss & drop & utility loss \\ \midrule
			D1 (51.2\%) & Color-blind & 25\% & 1.0000 & 0.0000 & 0 & 0.0000 \\
			COMPAS, & FA*IR p=0.5 & 46\% & \textbf{0.9858} & \textbf{0.2026} & \textbf{319} & \textbf{0.1087} \\
			race$=$Afr.-Am. & \citeauthor{Feldman2015} & 51\% & 0.9779 & 0.2281 & 393 & 0.1301 \\ \midrule

			D2 (80.7\%) & Color-blind & 73\% & 1.0000 & 0.0000 & 0 & 0.0000 \\
			COMPAS, & FA*IR p=0.8 & 77\% & \textbf{1.0000} & \textbf{0.1194} & \textbf{161} & \textbf{0.0320} \\
			gender$=$male & \citeauthor{Feldman2015} & 81\% & 0.9973 & 0.2090 & 294 & 0.0533 \\ \midrule

			D3 (19.3\%) & Color-blind & 28\% & 1.0000 & 0.0000 & 0 & 0.0000 \\
			COMPAS, & FA*IR p=0.2 & 28\% & \textbf{0.9999} & \textbf{0.2239} & \textbf{1} & \textbf{0.0000} \\
			gender$=$female & \citeauthor{Feldman2015} & 19\% & 0.9972 & 0.3028 & 278 & 0.0533 \\ \midrule

			D4 (69.0\%) & Color-blind & 74\% & 1.0000 & 0.0000 & 0 & 0.0000 \\
			Ger. cred, & FA*IR p=0.7 & 74\% & \textbf{1.0000} & \textbf{0.0000} & \textbf{0} & \textbf{0.0000} \\
			gender$=$female & \citeauthor{Feldman2015} & 69\% & 0.9988 & 0.1197 & 8 & 0.0224 \\ \midrule

			D5 (14.9\%) & Color-blind & 9\% & 1.0000 & 0.0000 & 0 & 0.0000 \\
			Ger. cred,& FA*IR p=0.2 & 15\% & \textbf{0.9983} & \textbf{0.0436} & \textbf{7} & \textbf{0.0462} \\
			age $<$ 25 & \citeauthor{Feldman2015} & 15\% & 0.9952 & 0.1656 & 8 & \textbf{0.0462} \\ \midrule

			D6 (54.8\%) & Color-blind & 24\% & 1.0000 & 0.0000 & 0 & 0.0000 \\
			Ger. cred, & FA*IR p=0.6 & 50\% & \textbf{0.9913} & \textbf{0.1137} & \textbf{30} & \textbf{0.0593} \\
			age $<$ 35 & \citeauthor{Feldman2015} & 55\% & 0.9853 & 0.2123 & 36 & 0.0633 \\ \midrule

			D7 (53.1\%) & Color-blind    & 49\% & 1.0000 & 0.0000 & 0 & 0.0000 \\
			SAT, 	    & FA*IR p=0.6    & 57\% & \textbf{0.9996} & \textbf{0.0167} & 365 & 0.0083 \\
			gender$=$female & \citeauthor{Feldman2015} & 56\% & \textbf{0.9996} & \textbf{0.0167} & \textbf{241} & \textbf{0.0042} \\ \midrule

			D8a (27.5\%)  & Color-blind    & 28\% & 1.0000 		& 0.0000 	& 0  & 0.0000 \\
			Economist,     & FA*IR p=0.3    & 28\% & \textbf{1.0000} & \textbf{0.0000} & \textbf{0}  & \textbf{0.0000} \\
			gender$=$female & \citeauthor{Feldman2015} & 28\% & 0.9935 		& 0.6109 	& 5 & \textbf{0.0000} \\ \midrule
			D8b (42.5\%)  & Color-blind    & 43\% & 1.0000 		& 0.0000 	& 0           & 0.0000 \\
			Mkt. Analyst,  & FA*IR p=0.4    & 43\% & \textbf{1.0000} & \textbf{0.0000} & \textbf{0}  & \textbf{0.0000} \\
			gender$=$male & \citeauthor{Feldman2015} & 43\% & 0.9422 		& 1.0000 	& 5 & \textbf{0.0000} \\ \midrule
			D8c (29.7\%)  & Color-blind    & 30\% & 1.0000 		& 0.0000 	& 0           & 0.0000 \\
			Copywriter,    & FA*IR p=0.3    & 30\% & \textbf{1.0000} & \textbf{0.0000} & \textbf{0}  & \textbf{0.0000} \\
			gender$=$female & \citeauthor{Feldman2015} & 30\% & 0.9782 & 0.4468 & 10 & \textbf{0.0000} \\
			\bottomrule
		\end{tabular}
	}
	\vspace{-3mm}
\end{table}

Table~\ref{tbl:results} summarizes the results. We report on the result using $p$ as a multiple of $0.1$ close to the proportion of protected elements in each dataset.
First, we observe that in general changes in utility with respect to the color-blind ranking are minor, as the utility is dominated by the top positions, which do not change dramatically.
Second, \algoFAIR achieves higher or equal selection utility than the baseline~\cite{Feldman2015} in all but one of the experimental conditions (D7).
Third, \algoFAIR achieves higher or equal ordering utility in all conditions. This is also reflected in the rank loss of the most unfairly
treated candidate included in the ranking ({\em i.e.}, the candidate that achieves the maximum ordering utility loss). 

Interestingly, \algoFAIR allows to create rankings for multiple values of $p$, something that cannot be done directly with the baselines (\citet{Feldman2015} allows what they call a ``partial repair,'' but through an indirect parameter determining a mixture of the original and a transformed distribution).
Figure~\ref{fig:results-moving-p} shows results when varying $p$ in dataset D4 (German credit, the protected group is people under 25 years old).
This means that \algoFAIR allows a wide range of positive actions, for instance, offering favorable credit conditions to people with good credit rating, with a preference towards younger customers.
In this case, the figure shows that we can double the proportion of young people in the top-$k$ ranking (from the original 15\% up to 30\%) without introducing a large ordering utility loss and maintaining NDCG almost unchanged.

\begin{figure}[t!]
	\centering
	\subfigure[Ordering utility]{\includegraphics[width=.49\columnwidth]{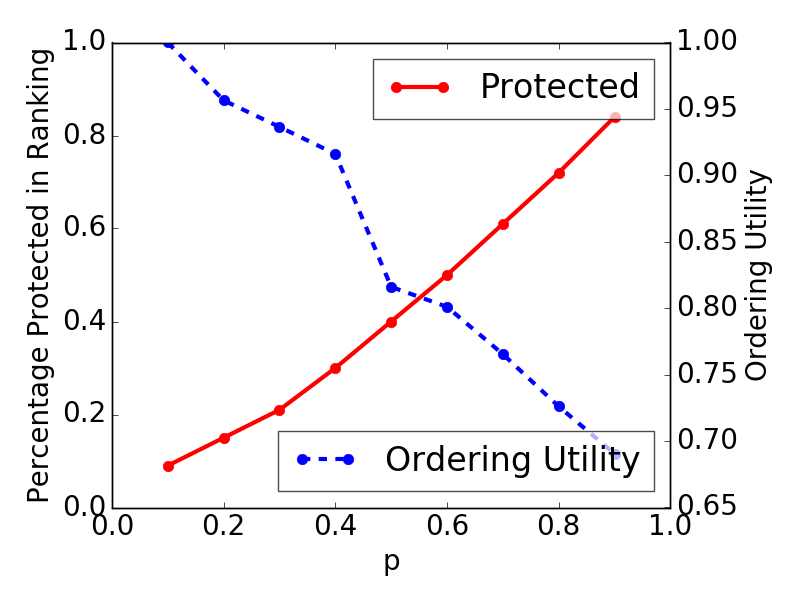}}
	\subfigure[NDCG]{\includegraphics[width=.49\columnwidth]{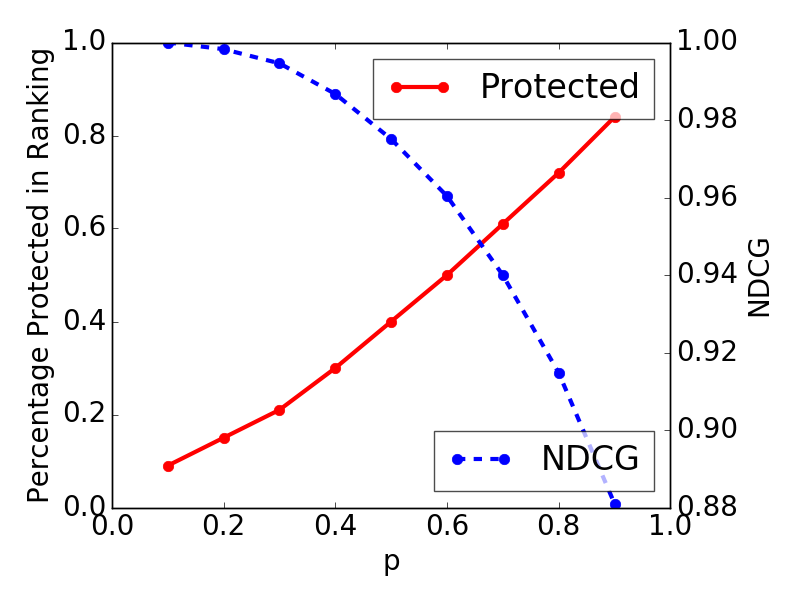}}
	\vspace{-2mm}
	\caption{Depiction of possible trade-offs using \algoFAIR. Increase in the percentage of protected candidates in D5 (German credit, protected group age $<$ 25) for increasing values of $p$, compared to 
		ordering utility and NDCG.}
	\vspace{-\baselineskip}
	\label{fig:results-moving-p}
\end{figure}

\section{Conclusions}\label{sec:conclusions}

The method we have presented can generate a ranking with a guaranteed ranked group fairness, and as we have observed, does not introduce a large utility loss.
Compared to the baseline of~\citet{Feldman2015}, in general we introduce the same or less utility loss. We also do not assume that the distributions of qualifications in the protected and non-protected groups have a similar shape.
More importantly, we can directly control through a parameter $p$ the trade-off between fairness and utility.

\spara{Future work.}
For simplicity, we have considered a situation where people belong to a protected or a non-protected group, and leave the case of multiple protected groups or combinations of protected attributes for future work; we plan to adapt methods based on linear programming to achieve this~\cite{celis2017ranking}.
We are also experimenting with the related problems we considered in \S\ref{concept:related-problems}, including directly bounding the maximum utility loss (ordering or selection), while maximizing ranked group fairness, or weighing the three criteria.

One of the main challenges is to create an in-processing ranking method instead of a post-processing one. However, we must also be cautious as results by \citet{kleinberg2016inherent} stating that one cannot have a predictor of risk that is well calibrated and satisfies statistical parity requirements, may imply that having a fair ranking by construction is not possible. We should also consider explainable discrimination~\cite{vzliobaite2011handling}, or even try to show a causal relation between protected attributes and qualification scores.

Experimentally, there are several directions. For instance, we have used real datasets that exhibit some differences among protected and non-protected groups; experiments with synthetic datasets would allow to test with more extreme differences that are more rarely found in real data.
Further experimental work may be done to measure robustness to noise in qualifications, and later to evaluate the impact of this in a real application.

\spara{Reproducibility.}
All the code and data used on this paper is available online at \url{https://github.com/MilkaLichtblau/FA-IR_Ranking}.

\section{Acknowledgements}

This research was supported by the German Research Foundation, Eurecat and the Catalonia Trade
and Investment Agency (ACCI{\'O}). M.Z. and M.M. were supported by the GRF. C.C. and S.H. worked
on this paper while at Eurecat. C.C., S.H., and F.B. were supported by ACCI{\'O}. We would also
like to show our gratitude to the three anonymous reviewers for their comments and insights.


	\balance





\end{document}